\setlist[enumerate]{leftmargin=.5in}
\setlist[itemize]{leftmargin=.5in}
\crefname{hypothesis}{Hypothesis}{Hypotheses}
\DeclareMathOperator{\diag}{diag}
\title{The multiplex decomposition: An analytic framework for multilayer dynamical networks. \thanks{Submitted to the editors \today.
		\funding{The authors acknowledge financial support by the Deutsche Forschungsgemeinschaft (DFG) (German Research Foundation)--Project Nos. 411803875 and 440145547.
}}}
\author{
	Rico Berner \footnotemark[3] \thanks{Institute of Theoretical Physics, Technische Universit\"at Berlin, Hardenbergstr. 36, 10623 Berlin, Germany, (\email{rico.berner@physik.tu-berlin.de}).} 
	\and Volker Mehrmann \thanks{Institute of Mathematics, Technische Universit\"at Berlin, Strasse des 17. Juni 136, 10623 Berlin, Germany.}
	\and Eckehard Schöll \footnotemark[2] \thanks{Bernstein Center for Computational Neuroscience, Berlin, Germany\and Potsdam Institute for Climate Impact Research, Germany.}
	\and Serhiy Yanchuk \footnotemark[3]
}
\begin{document}

\maketitle

\begin{abstract}
Multiplex networks are networks composed of multiple layers such that the number of nodes in all layers is the same and the adjacency matrices between the layers are diagonal. We consider the special class of multiplex networks where the adjacency matrices for each layer are simultaneously triagonalizable. For such networks, we derive the relation between the spectrum of the multiplex network and the eigenvalues of the individual layers. As an application, we propose a generalized master stability approach that allows for a simplified, low-dimensional description of the stability of synchronized solutions in multiplex networks. We illustrate our result with a duplex network of FitzHugh-Nagumo oscillators. In particular, we show how interlayer interaction can lead to stabilization or destabilization of the synchronous state. Finally, we give explicit conditions for the stability of synchronous solutions in duplex networks of linear diffusive systems.
\end{abstract}

\begin{keywords}
  multiplex networks, multiplex decomposition, coupled oscillators, master stability function
\end{keywords}

\begin{AMS}
  34D06, 37Nxx, 92B20
\end{AMS}

\section{Introduction}\label{intro}

Complex networks are well-established models in science and technology, with a wide range of applications from physics, chemistry, biology, neuroscience, as well as engineering and socio-economic systems~\cite{NEW03}. Much work has been devoted to understanding the statistical and topological properties of complex connectivity structures~\cite{ALB02a,COS07} and the collective dynamics on these structure~\cite{BOC18}. This study focuses on synchronization which is a particularly important type of collective dynamics~\cite{PIK01} playing an important role in the theory of complex dynamical systems, e.g. power grids or neural systems~\cite{BOC18,MOT13a}. One of the most powerful methodologies to study the synchronization on complex network structures is the master stability approach~\cite{PEC98}. Since its introduction, this methodology has been further developed and extended~\cite{BOC06a,SUN09a,DAH12,LAD13,NIS06a} and is still under continuous investigation~\cite{BER20b,BOE20,HAR19c,MUL20,NIC18}.

A recent focus in the field of complex dynamical networks are multilayer networks~\cite{BEL19,BOC14,DOM13,KIV14,POR20}. A prominent example are social networks, which can be described as groups of people with different patterns of contacts or interactions between them~\cite{AMA17a,GIR02}. Other applications include communication, supply and transportation networks, e.g., subway or airline networks~\cite{CAR13d}. In neuroscience, multilayer networks represent, for example, interactions of neurons by the transport of nutrition and metabolic resources~\cite{NIC17,VIR16c} or the modular connectivity structure of the human brain~\cite{BAT17,ROE19a,VAI18}. A special case of multilayer networks are multiplex topologies, where each layer contains the same set of nodes and there are only pairwise connections between corresponding nodes from different layers. These structures possess remarkable and analytically accessible properties that have been widely studied and used to understand real-world networks~\cite{BAZ20,BAT14,SOL13,BIA13,DOM16,MUC10,RAD13b}.

Collective dynamics on multilayer and multiplex networks has been extensively analyzed over the last years. Various forms of synchronization patterns have been described such as complete~\cite{GEN16,TAN19,SHA19}, cluster~\cite{ROS20}, intralayer~\cite{RAK20}, interlayer~\cite{LEY17a} and relay synchronization~\cite{DRA20,LEY18,SAW18,SAW18c,WIN19,SHA20b}. It was shown how the interaction of two layers can be used to induce and control certain types of dynamical phenomena such as phase cluster states~\cite{BER20}, solitary~\cite{MIK18} and chimera states~\cite{FRO18,MAK16,NIK19,OME19,RUZ20,RYB19}, explosive synchronization~\cite{ZHA15a}, congestion~\cite{SOL16}, stochastic and coherence resonance~\cite{SEM18,YAM20}. Recently, intimate relations between adaptive and multilayered coupling structures have been elucidated~\cite{BER19,BER19a,BER20c,KAS17,KAS18,PIT18}.

Other important topics are diffusive dynamics~\cite{CEN19a,GOM13,REQ16}, spreading processes~\cite{ARR18,SOR18,VAL15a}, and social interactions~\cite{GRA13} on multiplex networks. Here, in particular, the network structure characterized by the spectral properties is of crucial relevance~\cite{ARR17a,DOM16a,RAD13b,SOL13a}. For example, the maximal Laplacian eigenvalue can be directly linked to the epidemic threshold in models of infection spreading~\cite{SOR18}. By using perturbation methods, analytically interesting features for weak and strong interacting layers were found in~\cite{SOL13a}. However, little is known for the full range of interactions.

In this paper we introduce a method, which we call \emph{multiplex decomposition}. This framework allows for a rigorous description of the spectral properties of multiplex networks. For this, we establish a connection between the eigenvalues of multiplex networks and the eigenvalues of the individual layers for the special case when the adjacency matrices of individual layers are simultaneously triagonalizable. This holds in particular, if they are pairwise commuting. The multiplex decomposition allows us to greatly simplify the study of synchronization on multiplex networks. Namely, we show how the master stability function, which describes the stability of synchronization in the multiplex network, can be derived from the master stability function for the individual layer.

This article is organized as follows. In Section~\ref{sec:Sync} we provide a brief introduction to the synchronization problem in complex networks of coupled oscillators. Subsequently, in Section~\ref{sec:Mltplx_NW}, we introduce the  multiplex networks. In Section~\ref{sec:Mltplx_decomp} we present and analyze the multiplex decomposition method. In the subsequent Section~\ref{sec:Mltplx_decomp_appl} various applications of the multiplex decomposition are provided. More precisely, we establish a generalized framework for the master stability approach on multiplex networks, illustrate the generalized master stability approach for a network of FitzHugh-Nagumo oscillators, and provide an analytic description for the dynamics of a linear diffusive system on a duplex network. The results are summarized in Section~\ref{sec:conclusion}.


\section{Single layer networks: the synchronization problem}\label{sec:Sync}
We consider dynamical networks where a single layer is described by the following system
\begin{align}\label{eq:DynSys_OneLayer}
	\frac{\mathrm{d}\bm{x}_i}{\mathrm{d}t} = F(\bm{x}_i) - \sum_{j=1}^N a_{ij} H(\bm{x}_i-\bm{x}_j).
\end{align}
Here $\bm{x}_i=[x_{i,1},\dots,x_{i,d}]^T\in\mathbb{R}^d$ is the state vector of the $i$-th node, $i=1,\dots,N$, the network connectivity structure is given by the real $N\times N$ adjacency matrix $A=[a_{ij}]$. The functions $F$ and $H$ describe the local dynamics and the coupling between different nodes, respectively. We further assume $H(0)=0$. The coupling via the difference of the state variables $\bm{x}_i-\bm{x}_j$ is called diffusive and has been extensively studied in the literature, see, e.g., \cite{BEL05c,PEC98,POI19,YAN03b}.

For the network structure given by the adjacency matrix $A$, we only assume that it is (strongly) connected~\cite{KOR18}. However, for all examples given below, we use the particular class of nonlocally coupled ring networks given by $A=[a_{ij}]$ with
\begin{align}
	a_{ij}=\begin{cases}
		\kappa \quad\mbox{for}\ \ 0<| i-j| \le P \mbox{ or } | i-j| \ge N - P,\\
		0\quad\mbox{otherwise.}
	\end{cases}\label{eq:ring}
\end{align}
This means that any two oscillators on the ring are coupled if their indices $i$ and $j$ are separated at most by the coupling radius $P$. The adjacency matrix~(\ref{eq:ring}) defines a nonlocal ring structure with coupling range $P$ including two special cases: a local ring for $P=1$ and a globally coupled network for $P=N/2$ if $N$ is even, or $P=(N+1)/2$ otherwise. Note that self-couplings are excluded, since $a_{ii}=0$.
The adjacency matrix defined by (\ref{eq:ring}) is  a symmetric circulant matrix~\cite{DAV79} and therefore has constant row sum $\sum_{j=1}^{N}a_{ij}=2P\kappa$ for all $i=1,\dots,N$.

In the following, for numerical illustrations, we use networks of coupled FitzHugh-Nagumo oscillators that are well-known paradigmatic models for excitable neuronal systems~\cite{CHO18,GER20,RAM19,TAN18,YAN11}. Note that while the FitzHugh-Nagumo model was originally developed as a simplified model of a single neuron, it is also often used as a generic model for excitable media on a coarse-grained level~\cite{CHE07a,STE08}.

The local dynamics is given by
\begin{align}\label{eq:FHN_local}
	F([u,v]^T)
	= \begin{bmatrix}
		\frac{1}{\epsilon}\left(u - \frac{u^3}{3} - v\right)\\
		u + a
	\end{bmatrix},
\end{align}
and the coupling function between the nodes by
\begin{align}\label{eq:FHN_coupling}
	H([u,v]^T)
	=
		\begin{bmatrix}
	\frac{1}{\epsilon} \cos\varphi & \frac{1}{\epsilon} \sin\varphi \\
	-\sin\varphi & \cos\varphi
	\end{bmatrix}
	\begin{bmatrix}
		u\\v
	\end{bmatrix},
\end{align}
where $\epsilon$ describes the timescale separation between the fast activator variable $u$ and the slow inhibitor variable $v$~\cite{FIT61}. Depending on the threshold parameter $a$, each uncoupled node may exhibit excitable behavior ($\left| a \right| > 1$) or self-sustained limit cycle oscillations ($\left| a \right| < 1$), separated by a Hopf bifurcation at $\left| a \right| = 1$. We use the FitzHugh-Nagumo model in the oscillatory regime and fix the threshold parameter at $a=0.5$ sufficiently far from the Hopf bifurcation point.

The coupling function is chosen as a rotation matrix 
to parametrize the possibility of either diagonal coupling or activator-inhibitor cross-coupling by a single parameter $\varphi$.
For all simulations, we choose $\varphi = \frac{\pi}{2} - 0.1$, causing dominant activator-inhibitor cross-coupling, which is a commonly employed mechanism in biology~\cite{KIS04}.
Physically, this means that neuronal areas are coupled with a coupling phase lag $\varphi$~\cite{OME13}. The coupling phase has been shown to be crucial for the modeling of nontrivial partial synchronization patterns in the Kuramoto model~\cite{OME10a} and the FitzHugh-Nagumo model~\cite{OME13}.

Subsequently, we discuss complete synchronization defined as follows.
\begin{definition}
	A solution of \eqref{eq:DynSys_OneLayer} is called synchronous if $\bm{x}_i(t)=\bm{s}(t)$ for all $i=1,\dots,N$, all $t$ within a considered time interval (e.g. $\mathbb{R}$), and some function $\bm{s}(t)$.
\end{definition}
It is easily verified that system~\eqref{eq:DynSys_OneLayer} possesses a synchronous solution if $\bm{s}(t)$ solves $\dot{\bm{s}}=F(\bm{s})$. The local stability of the synchronous solution can be analyzed by studying the linearization of system~\eqref{eq:DynSys_OneLayer} around $\bm{s}(t)$. This linearized ($Nd$)-dimensional system is given by
%
\begin{align}\label{eq:CoupledSysDiff_OneLayer}
	\dot{\bm{\xi}}=\left[\mathbb{I}_{N}\otimes \mathrm{D}F(\bm{s})- L\otimes\mathrm{D}H(0)\right]\bm{\xi},
\end{align}
where $\otimes$ denotes the Kronecker product, see e.g. \cite{LIE15}, $\mathrm{D}$ denotes the derivative, and $\bm{\xi}=\bm{x}-\hat{1}_N\otimes\bm{s}$ with $\bm{x}=[\bm{x}_1,\dots,\bm{x}_N]^T, \hat{1}_N=[1,\dots,1]^T \in \mathbb R^N$. The \emph{Laplacian matrix} is defined as
\begin{align*}
	L=\begin{bmatrix}
		\sum_{j=1}^N a_{1j} & &\\
		& \ddots &\\
		& & \sum_{j=1}^N a_{Nj}
	\end{bmatrix}-A.
\end{align*}
In accordance with the master stability approach~\cite{PEC98}, system~\eqref{eq:CoupledSysDiff_OneLayer} can be block diagonalized if the Laplacian matrix $L$ is diagonalizable, i.e., there exists a unitary matrix $Q$ such that $L=Q D_L Q^H$, with diagonal matrix $D_L$. Here the superscript $H$ indicates the Hermitian conjugate. Hence, the local stability of the synchronous solution is determined by $N$ equations of dimension $d$
\begin{align}\label{eq:MSF_OneLayer}
	\dot{{\bm{\zeta}}}_i=\left[\mathrm{D}F(\bm{s})- \lambda_i \mathrm{D}H(0)\right]{\bm{\zeta}}_i,
\end{align}
where $\lambda_i \in \sigma(L)$, $i=1,\dots,N$   and ${\bm{\zeta}}=Q{\bm{\xi}}=[{\bm{\zeta}}_1,\dots,{\bm{\zeta}}_N]^T$. Here, by $\sigma(L)$ we denote the \emph{spectrum}, i.e. the set of eigenvalues of $L$, also called \emph{Laplacian eigenvalues} of the network. The local stability of the synchronous solution is determined by the largest Lyapunov exponent $\Lambda\in\mathbb{R}$ of system~\eqref{eq:MSF_OneLayer}. The solution is linearly stable if $\Lambda(\lambda_i)<0$ for all $i=1,\dots,N$; it is unstable if there exists at least one $\lambda_i$ such that $\Lambda(\lambda_i)>0$.
The \emph{master stability function} \cite{PEC98} is defined as
the largest Lyapunov exponent $\Lambda(\lambda)$, $\lambda\in \mathbb{C}$, of the system
$\dot{{{\zeta}}}=\left[\mathrm{D}F(\bm{s})-\lambda \mathrm{D}H(0)\right]{{\zeta}}$.
Once the master stability function is known, the stability of the synchronous solution can be deduced for any coupling structure by simply evaluating the master stability function at the points $\lambda_i$, $i=1,\dots,N$.

\begin{figure}\label{fig:FHN_Sync}
	\centering
	\includegraphics{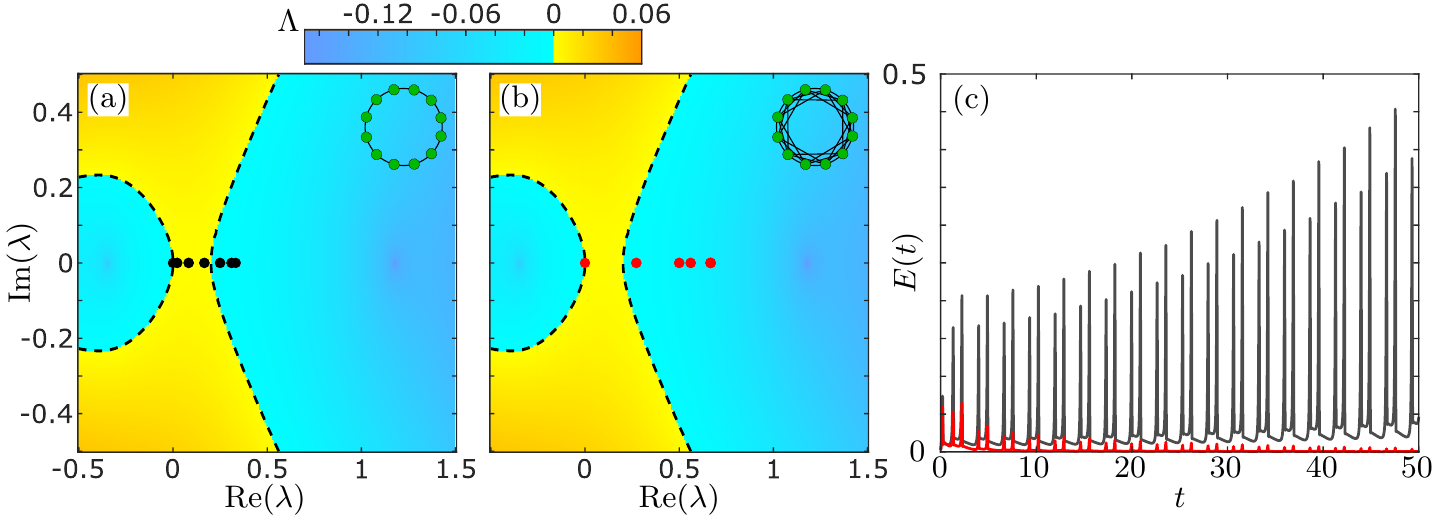}
	\caption{The synchronization problem in networks of FitzHugh-Nagumo oscillators. Panel (a,b) show the master stability function $\Lambda(\lambda)$ for a network of FitzHugh-Nagumo oscillators characterized by \eqref{eq:FHN_local} and \eqref{eq:FHN_coupling}. Additionally, in panel (a) and (b), the Laplacian eigenvalues are plotted as dots in black and red for the locally ($P=1$) and nonlocally ($P=3$) coupled ring networks illustrated in the inset in the upper right corner, respectively. Panel (c) shows the time series of the synchronization error $E(t)$ for simulations of coupled FitzHugh-Nagumo oscillators starting from a perturbed synchronized solution. The colors of the lines correspond to the colors of the dots in panel (a) and (b), respectively. Other parameters: $N=12$, $a=0.5$, $\epsilon=0.05$, $\varphi=\pi/2-0.1$, and $\kappa=1/N$.}
\end{figure}
In Figures~\ref{fig:FHN_Sync}(a,b), we present the numerically computed master stability function of a network of FitzHugh-Nagumo oscillators along with the Laplacian eigenvalues for two different network realizations. From the figures, we are able to infer that the synchronous solution is unstable for the locally coupled ring network in Figure~\ref{fig:FHN_Sync}(a) and stable for the nonlocally coupled ring network ($P=3$) in Figure~\ref{fig:FHN_Sync}(b). In order to verify the stability features, we numerically integrate the network equation
given by \eqref{eq:DynSys_OneLayer}, \eqref{eq:FHN_local}, and \eqref{eq:FHN_coupling}. For the simulation, we use a slightly perturbed synchronous solution as initial condition. For the visualization of the simulation results, we use the synchronization error $E(t)$ in order to quantify whether a network of oscillators achieves synchronization or not. The synchronization error is given by
\begin{align}\label{eq:SyncErr}
	E(t) = \sqrt{\sum_{m=1}^{d}\sum_{j=1}^N\left(x_{j,m}(t)-\frac{1}{N}\sum_{i=1}^{N}x_{i,m}\right)^2}.
\end{align}
Note that the value of $E(t)$ tends to zero if the solution of Equation~\eqref{eq:DynSys_OneLayer} tends to the synchronous solution. In Figure~\ref{fig:FHN_Sync}(c) it is clearly visible that the numerical solution diverges from (black line) and converges to (red line) the synchronous solution for the locally coupled network and nonlocally coupled ring network, respectively. This is in accordance with the master stability function.

In this section, we have outlined the interplay of the network structure and the dynamics with regards to the synchronous collective behavior of interacting agents. In the following section, we lift this problem of synchronization to multiplex network structures and show how the results for single layers can be used to obtain synchronization conditions for multiplex networks.

\section{Multiplex networks}\label{sec:Mltplx_NW}
Multilayer networks are networks where the whole set of nodes is divided into subsets which are said to belong together for various reasons. The induced subnetworks are then called layers. From the mathematical perspective, multilayer networks are simply networks. However, the special structure of a multilayer network, i.e., the partition into several subsets of nodes, has recently been considered to be very important in order to describe the dynamics on real-world networks~\cite{DE15,DIA16}. For  reviews on multilayer networks and their mathematical description, we refer to~\cite{BOC14,DOM13,KIV14}.

In the following, we consider so-called
\emph{multiplex networks} which form a particular class of multilayer networks. These consist of $M\in\mathbb{N}$ layers of $N$ nodes where the connectivity structure within the layer is given by adjacency matrices $A_\ell$ ($\ell=1,\dots,M$). These adjacency matrices determine the (directed) intralayer network structures. The connectivity between the layers is determined by the (directed) interlayer coupling structure expressed by an $M\times M$ matrix
\begin{align}\label{eq:interlayerCouplStruc}
	\mathcal{S} = \begin{bmatrix}
		0& \kappa_{1,2} & \cdots & \kappa_{1,M} \\
		\kappa_{2,1} & \ddots &\ddots & \vdots \\
		\vdots & \ddots & \ddots & \kappa_{M-1,M} \\
		\kappa_{M,1} & \cdots & \kappa_{M,M-1} & 0
	\end{bmatrix}.
\end{align}
If two layers $k$ and $\ell$ are connected, the corresponding entry $\kappa_{k,\ell}\ne 0$ indicates the strength of their connection. We restrict ourselves to multiplex networks, where all layers are connected in a one-to-one manner which is expressed by the identity matrix $\mathbb{I}_N$. An illustration of a duplex ($M=2$) and a triplex ($M=3$) network is provided in Figure~\ref{fig:MltPlx_Illu}. Without loss of generality, we assume that the layers do not have self-interconnection, i.e., $\kappa_{\ell,\ell}=0$ for all $\ell$.
\begin{figure}\label{fig:MltPlx_Illu}
	\centering
	\includegraphics{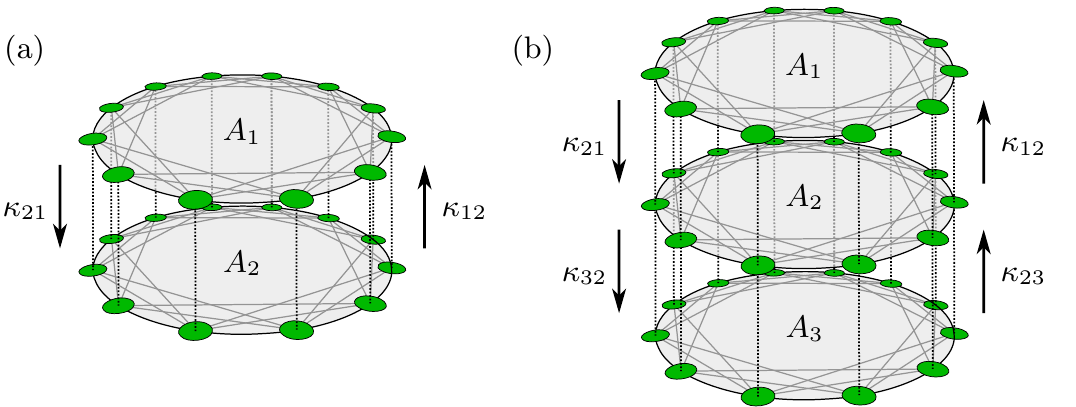}
	\caption{Illustrations of a duplex ($M=2$) (a) and a triplex ($M=3$) (b) network. In example (b), the layers $A_1$ and $A_3$ are not connected, i.e., $\kappa_{13}=\kappa_{31}=0$.}
\end{figure}
A general algebraic representation of multilayer networks can be achieved by multilinear forms, i.e., tensor structures~\cite{DOM13}. By flattening these tensors, however, a representation via an adjacency matrix is obtained. Here flattening means that one can relate a finite dimensional tensor space to another finite dimensional vector space via an isomorphism, see~\cite{KIV14}. In this context, the flattened representation 
takes the following block matrix form
\begin{align}\label{eq:multiplex_adjMtx}
	A^{(M)} = \begin{bmatrix}
		A_1& \kappa_{1,2}\mathbb{I} & \cdots & \kappa_{1,M}\mathbb{I} \\
		\kappa_{2,1}\mathbb{I} & \ddots &\ddots & \vdots \\
		\vdots & \ddots & \ddots & \kappa_{M-1,M}\mathbb{I} \\
		\kappa_{M,1}\mathbb{I} & \cdots & \kappa_{M,M-1} \mathbb{I} & A_M
	\end{bmatrix}=\mathcal{A}+\mathcal{S}\otimes \mathbb{I},
\end{align}
where $\mathcal{A}=\mathrm{diag}\left( A_1,\dots,A_M\right)$.

Although the described classes of multiplex networks possess a relatively simple form, they are far from completely understood from an analytic point of view. In the next section, we analyze the spectrum of such multiplex networks.

\section{The multiplex decomposition}\label{sec:Mltplx_decomp}
In this section, using the spectra $\sigma(A_k)$ of the  individual layers, we analyze the spectrum $\sigma(A^{(M)})$ for the class of multiplex networks introduced in the previous section.
We recall the basic fact that any square complex matrix $A$ is unitarily similar to an upper triangular matrix $T_A$, i.e. $A = Q T_A Q^H$, \emph{the Schur form}, see e.g.~\cite{LIE15}.
Note further that if  $A$ is \emph{normal}, i.e. $A^HA=AA^H$, then the Schur form is even diagonal. Further, we call two matrices $A$ and $B$ simultaneously triagonalizable if there exists a unitary matrix $Q$ such that $T_A = Q^H A Q$ and $T_B = Q^H B Q$ are triangular matrices. In particular, if $A$ and $B$ commute, they are simultaneously triagonalizable.
With these preliminaries, we can state the following result.
\begin{proposition}\label{prop:DeterminantDecompDiagAble}
If the set of matrices $A_{\ell,k}\in \mathbb{C}^{N\times N}$, $k,\ell=1,\dots,M$ are simultaneously triagonalizable, then the block matrix
	\begin{align}\label{eq:DeterminantDecompDiagAble}
		\mathbf{A} =
		 \begin{bmatrix}
			A_{1,1} & \cdots & A_{1,M}\\
			\vdots & \ddots & \vdots\\
			A_{M,1} & \cdots & A_{M,M}
		\end{bmatrix}
	\end{align}
	is unitarily similar to
	\begin{align*}
			\mathbf{T} =
	 \begin{bmatrix}
		T_{A_{1,1}} & \cdots & T_{A_{1,M}}\\
		\vdots & \ddots & \vdots\\
		T_{A_{M,1}} & \cdots & T_{A_{M,M}}
	\end{bmatrix},
\end{align*}
	where the blocks $T_{A_{k,\ell}}$ are the (common) Schur forms of the matrices $A_{k,\ell}$ with respect to a common unitary matrix $Q$. More specifically,
	\begin{align}
	\mathbf{A} = \mathrm{diag}\left(Q,\cdots,Q\right) \ \mathbf{T} \ \mathrm{diag}\left( Q^H,\cdots,Q^H\right).
	\end{align}
\end{proposition}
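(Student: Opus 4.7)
The plan is essentially a direct block-matrix computation, so the proof will be short and mostly mechanical. The key observation is that because all the $A_{k,\ell}$ share one common unitary $Q$ (this is exactly the hypothesis of simultaneous triagonalizability), we can assemble a single block-diagonal unitary $\mathcal{Q} = \mathrm{diag}(Q,\dots,Q)$ of size $MN \times MN$ and perform the similarity transformation on $\mathbf{A}$ all at once.

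First, I would verify that $\mathcal{Q}$ is indeed unitary: a block-diagonal matrix whose diagonal blocks are unitary is itself unitary, since $\mathcal{Q}^H\mathcal{Q} = \mathrm{diag}(Q^HQ,\dots,Q^HQ) = \mathrm{diag}(\mathbb{I}_N,\dots,\mathbb{I}_N) = \mathbb{I}_{MN}$. This ensures that the similarity $\mathcal{Q}^H \mathbf{A} \mathcal{Q}$ is genuinely a \emph{unitary} similarity.

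Next I would carry out the block multiplication. Using the standard block-matrix product rules, the $(k,\ell)$ block of $\mathcal{Q}^H \mathbf{A}\,\mathcal{Q}$ is
\begin{align*}
\bigl(\mathcal{Q}^H \mathbf{A}\,\mathcal{Q}\bigr)_{k,\ell} \;=\; Q^H\, A_{k,\ell}\, Q \;=\; T_{A_{k,\ell}},
\end{align*}
because the block-diagonal form of $\mathcal{Q}$ kills every cross term. Assembling these blocks gives $\mathcal{Q}^H \mathbf{A}\,\mathcal{Q} = \mathbf{T}$, which, after multiplying on the left by $\mathcal{Q}$ and on the right by $\mathcal{Q}^H$, produces the claimed identity $\mathbf{A} = \mathrm{diag}(Q,\dots,Q)\,\mathbf{T}\,\mathrm{diag}(Q^H,\dots,Q^H)$.

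There is no real obstacle here; the only thing to be careful about is notational: one must remember that $Q$ is common to all blocks (this is precisely what simultaneous triagonalizability buys us), and that the Schur forms $T_{A_{k,\ell}}$ are defined \emph{with respect to this same} $Q$. The existence of such a $Q$ is guaranteed by hypothesis (and, as noted in the text, holds in particular when the $A_{k,\ell}$ pairwise commute), so the statement follows at once from the block computation above.
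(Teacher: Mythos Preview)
Your proof is correct and follows exactly the same approach as the paper's: both use the common unitary $Q$ from the simultaneous triagonalization hypothesis to build the block-diagonal unitary $\mathrm{diag}(Q,\dots,Q)$ and then apply it from the left and right to $\mathbf{A}$, obtaining $\mathbf{T}$ blockwise. Your version is in fact slightly more explicit (you check that $\mathcal{Q}$ is unitary and write out the $(k,\ell)$ block computation), but the argument is identical.
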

\begin{proof}
	All $A_{k,\ell}$ can be transformed to (a common) Schur form with the same $Q$ \cite{LIE15}, i.e. $A_{k,\ell} = Q T_{A_{k,\ell}} Q^H$.
The assertion then follows by applying the block diagonal matrices $\mathrm{diag}\left( Q,\cdots,Q\right)$ and $\mathrm{diag}\left( Q^H,\cdots,Q^H\right)$ from the left and right, respectively.
\end{proof}
For the multiplex network~\eqref{eq:multiplex_adjMtx}, Proposition~\ref{prop:DeterminantDecompDiagAble} yields the following result.
\begin{corollary}\label{cor:MultiplexDecomp}
	Let $A^{(M)}$ be the multiplex adjacency matrix~\eqref{eq:multiplex_adjMtx} with simultaneously triagonalizable matrices $A_{\ell}$, $\ell=1,\dots,M$, and the interlayer coupling matrix $\mathcal{S}$ as in \eqref{eq:interlayerCouplStruc}.
	Then, $A^{(M)}$ is unitarily similar to
		\begin{align}
			\label{BBB}
			\mathbf{B} =
\begin{bmatrix}
	B_{1,1} & B_{1,2} & \cdots & B_{1,N}\\
	0 & \ddots & \ddots & \vdots\\
	\vdots & \ddots & \ddots & B_{N-1,N}\\
	0 & \cdots & 0 & B_{N,N}
\end{bmatrix},
	\end{align}
	where
	\begin{align}
		\label{BBB111}
	B_{i,i} = \begin{bmatrix}
		(\lambda_{1})_i & & 0\\
		& \ddots & \\
		0 &  & (\lambda_{M})_i
	\end{bmatrix} + \mathcal{S}=
\begin{bmatrix}
	(\lambda_{1})_i & \kappa_{1,2} & \cdots & \kappa_{1,M}\\
	\kappa_{2,1} & \ddots & \ddots & \vdots\\
	\vdots & \ddots & \ddots & \kappa_{M-1,M}\\
	\kappa_{M,1} & \cdots & \kappa_{M,M-1} & (\lambda_{M})_i
\end{bmatrix}
,
\end{align}
and $(\lambda_{\ell})_i$ is the $i$th eigenvalue of $A_{\ell}$ corresponding to a (common) Schur form $A_\ell$ under a common unitary transformation $Q$.

In particular, $\sigma(A^{(M)})$ is given as the union of the spectra of the $N$ matrices $B_{i,i}$, $i=1,\dots,N$, which are of dimension $M\times M$, and are functions of the eigenvalues $(\lambda_l)_i$ of the layer Laplacians, i.e.
\begin{equation}
\label{eq:mltplxDecomp}
\sigma (A^{(M)}) = \bigcup_{i=1}^N
\sigma \left(
\begin{bmatrix}
(\lambda_{1})_i & & 0\\
& \ddots & \\
0 &  & (\lambda_{M})_i
\end{bmatrix} + \mathcal{S}
\right).
\end{equation}
\end{corollary}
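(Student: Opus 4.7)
The plan is to reduce the corollary to Proposition~\ref{prop:DeterminantDecompDiagAble} followed by a perfect-shuffle permutation that converts the $M{\times}M$ outer block structure (blocks of size $N$) into an $N{\times}N$ outer block structure (blocks of size $M$), under which the result becomes block upper triangular.

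First, I would view $A^{(M)}$ as an $M{\times}M$ block matrix with blocks $A_{k,\ell}\in\mathbb{C}^{N\times N}$ where $A_{\ell,\ell}=A_\ell$ and $A_{k,\ell}=\kappa_{k,\ell}\mathbb{I}_N$ for $k\neq\ell$. By assumption, the diagonal blocks $A_\ell$ share a common unitary Schur transformation $Q$, and the off-diagonal blocks $\kappa_{k,\ell}\mathbb{I}_N$ are trivially triangularized by $Q$, since $Q^H\mathbb{I}_N Q=\mathbb{I}_N$. Hence the hypotheses of Proposition~\ref{prop:DeterminantDecompDiagAble} are satisfied, and $A^{(M)}$ is unitarily similar (via $\mathrm{diag}(Q,\dots,Q)$) to the block matrix $\mathbf{T}$ whose diagonal blocks are the common Schur forms $T_{A_\ell}$ and whose off-diagonal blocks remain $\kappa_{k,\ell}\mathbb{I}_N$.

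Next I would apply the index-swap permutation that reorders the basis from $(\ell,i)$ to $(i,\ell)$, where $\ell=1,\dots,M$ indexes layers and $i=1,\dots,N$ indexes nodes. Let $P$ denote the corresponding permutation matrix (a perfect shuffle, hence unitary) and set $\mathbf{B}=P\,\mathbf{T}\,P^H$. A direct index computation gives the entries of the $(i,j)$-th $M{\times}M$ block of $\mathbf{B}$ as
\begin{equation*}
(B_{i,j})_{\ell,k} \;=\; \mathbf{T}_{(\ell,i),(k,j)} \;=\;
\begin{cases}
(T_{A_\ell})_{i,j} & \text{if } \ell=k,\\
\kappa_{\ell,k}\,\delta_{i,j} & \text{if } \ell\neq k.
\end{cases}
\end{equation*}
Reading off: when $i=j$, the off-diagonal entries of $B_{i,i}$ equal $\kappa_{\ell,k}$ and the diagonal entries are $(T_{A_\ell})_{i,i}=(\lambda_\ell)_i$, giving exactly the form~\eqref{BBB111}. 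When $i\neq j$, the off-diagonal part vanishes (because $\delta_{i,j}=0$), so $B_{i,j}=\mathrm{diag}\bigl((T_{A_1})_{i,j},\dots,(T_{A_M})_{i,j}\bigr)$; in particular, $B_{i,j}=0$ whenever $i>j$ because each $T_{A_\ell}$ is upper triangular. This establishes the block upper triangular form~\eqref{BBB}.

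Finally, I would invoke the standard fact that the spectrum of a block triangular matrix is the union of the spectra of its diagonal blocks. Since $A^{(M)}$ is unitarily similar to $\mathbf{B}$ (by composing the two unitary transformations), the spectra agree, and
\begin{equation*}
\sigma(A^{(M)}) \;=\; \sigma(\mathbf{B}) \;=\; \bigcup_{i=1}^N \sigma(B_{i,i}),
\end{equation*}
which is precisely~\eqref{eq:mltplxDecomp}. The only step requiring real care is the bookkeeping for the perfect-shuffle permutation; everything else follows immediately from Proposition~\ref{prop:DeterminantDecompDiagAble} and the triangular structure of the Schur forms.
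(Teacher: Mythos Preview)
Your proposal is correct and follows essentially the same approach as the paper: apply Proposition~\ref{prop:DeterminantDecompDiagAble} to obtain the block matrix with Schur forms $T_{A_\ell}$ on the diagonal and $\kappa_{k,\ell}\mathbb{I}_N$ off the diagonal, then use the perfect-shuffle permutation to pass to the $N\times N$ block structure and read off the block upper triangular form from the triangularity of the $T_{A_\ell}$. Your explicit index computation for the shuffle is more detailed than what the paper records, but the strategy is identical.
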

\begin{proof}
Since all $A_{\ell}$ and $\kappa_{\ell,k}\mathbb{I}$ are simultaneously triagonalizable, Proposition~\ref{prop:DeterminantDecompDiagAble} implies that $A^{(M)}$ is unitarily similar to the matrix
	\begin{align}
		\label{DDD}
\begin{bmatrix}
		T_{A_{1}} & & 0\\
		& \ddots & \vdots\\
		0 & & T_{A_{M}}
	\end{bmatrix} + \mathcal{S} \otimes \mathbb{I}.
\end{align}
By employing the perfect-shuffle permutation matrix, see e.g. \cite{LIE15},
we see that \eqref{DDD} is unitarily similar to \eqref{BBB}--\eqref{BBB111}. The off-diagonal matrices $B_{i,j}$ ($i=1,\dots,N-1; N \ge j>i$) consist of the off-diagonal elements of the matrices $T_{A_k}$. The triangular block matrix form is therefore inherited from the matrices $T_{A_k}$.
\end{proof}

We call the relation~\eqref{eq:mltplxDecomp} the \emph{multiplex decomposition}. Note that the multiplex decomposition is in general not restricted to adjacency matrices alone. In fact, it can be used for Laplacian or more general types of matrices as long as the individual matrices $A_{\ell}$ are simultaneously triagonalizable.

In the special case that all layers are identical i.e. $L_1=\cdots=L_M=L$ with eigenvalues $\lambda_1,\dots,\lambda_N$ and the eigenvalues of $\mathcal{S}$ are given by $\rho_1,\dots,\rho_M$, then
\[
\sigma(A^{(M)}) = \{ \lambda_i+\rho_j \in \mathbb{C}:  \lambda_i\in \sigma(L),\rho_j\in \sigma(\mathcal{S})  \}.
\]
and  each eigenvalue of the multiplex network has the simple form
$\lambda_i + \rho_j$ for some $i=1,\dots,N$ and $j=1,\dots,M$.
This observation is a special case of the well-known Theorem of Stephanos, see e.g. \cite{LIE15}.

In the following, we illustrate the multiplex decomposition for a duplex system. Suppose that $A_1,A_2\in\mathbb{R}^{N\times N}$ and $\kappa_{ij}\in\mathbb{R}$ ($i,j=1,2$). Then, the $2N\times2N$ block matrix
\begin{align}\label{eq:DuplexAdj}
	A^{(2)} = \begin{bmatrix}
		A_1 & \kappa_{12}\mathbb{I}\\
		\kappa_{21}\mathbb{I} & A_2
	\end{bmatrix}
\end{align}
describes a duplex network and we have the following
direct consequence of Corollary~\ref{cor:MultiplexDecomp}.
\begin{corollary}\label{cor:EigenvaluePolynomsMultiplex}
	Suppose that two matrices $A_1,A_2\in\mathbb{R}^{N\times N}$ are simultaneously triagonalizable. Then,
	\begin{align}\label{eq:DuplexEigXXX}
		\sigma( A^{(2)}) =
		\bigcup_{i=1}^{N}
		\sigma \begin{bmatrix}
			(\lambda_1)_i  & \kappa_{12}\\
			\kappa_{21} & (\lambda_2)_i
		\end{bmatrix},
	\end{align}	
i.e.,	the eigenvalues of the duplex network $A^{(2)}$ can be found by solving the $N$ quadratic  equations
	\begin{align}\label{eq:DuplexDecomp}
		\mu^2-\left((\lambda_1)_i+(\lambda_2)_i\right)\mu+(\lambda_1)_i(\lambda_2)_i-\kappa_{12}\kappa_{21} = 0, \quad i=1,\dots,N,
	\end{align}
	where $(\lambda_{1})_i$ and $(\lambda_{2})_i$ are the corresponding eigenvalues of $A_1$ and $A_2$ that are ordered according to their (common) Schur forms resulting from a common unitary transformation $Q$.
\end{corollary}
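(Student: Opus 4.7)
The plan is to derive this as an immediate specialization of Corollary~\ref{cor:MultiplexDecomp} to the case $M=2$, and then compute the characteristic polynomial of each of the resulting $2 \times 2$ diagonal blocks by hand. Since Corollary~\ref{cor:MultiplexDecomp} already does the heavy lifting (the unitary block‑triangularization via the perfect‑shuffle permutation and the common Schur transformation $Q$), essentially nothing remains beyond a two‑by‑two determinant expansion.

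Concretely, I would first observe that the hypotheses of Corollary~\ref{cor:MultiplexDecomp} are exactly satisfied with $M=2$: the intralayer matrices $A_1, A_2$ are simultaneously triagonalizable by assumption, and together with the scalar multiples $\kappa_{12}\mathbb{I}$ and $\kappa_{21}\mathbb{I}$ of the identity (which are triangular in any basis), so is the full family $\{A_{\ell,k}\}$ appearing in the duplex adjacency matrix~\eqref{eq:DuplexAdj}. Applying Corollary~\ref{cor:MultiplexDecomp} then immediately yields
\begin{equation*}
\sigma\bigl(A^{(2)}\bigr) \;=\; \bigcup_{i=1}^{N} \sigma(B_{i,i}),
\qquad
B_{i,i} = \begin{bmatrix} (\lambda_1)_i & \kappa_{12} \\ \kappa_{21} & (\lambda_2)_i \end{bmatrix},
\end{equation*}
where $(\lambda_1)_i$ and $(\lambda_2)_i$ are the $i$th diagonal entries of the (common) Schur forms $T_{A_1}$ and $T_{A_2}$ under the shared unitary $Q$. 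This establishes \eqref{eq:DuplexEigXXX}.

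To obtain the explicit quadratic \eqref{eq:DuplexDecomp}, I would then expand the characteristic polynomial of the $2\times 2$ block $B_{i,i}$, namely
\begin{equation*}
\det(\mu\mathbb{I} - B_{i,i}) = \bigl(\mu - (\lambda_1)_i\bigr)\bigl(\mu - (\lambda_2)_i\bigr) - \kappa_{12}\kappa_{21},
\end{equation*}
and multiply out to recover exactly the stated polynomial $\mu^2 - ((\lambda_1)_i + (\lambda_2)_i)\mu + (\lambda_1)_i(\lambda_2)_i - \kappa_{12}\kappa_{21}$. Taking the union over $i=1,\dots,N$ completes the argument.

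There is really no substantive obstacle here: the one point worth flagging explicitly is the pairing of eigenvalues, i.e.\ that the pair $((\lambda_1)_i,(\lambda_2)_i)$ appearing in the $i$th quadratic is not arbitrary but is dictated by the \emph{same} ordering along the common Schur diagonals produced by $Q$. This is inherited directly from Corollary~\ref{cor:MultiplexDecomp}, so I would simply reiterate this in the conclusion of the proof to avoid ambiguity.
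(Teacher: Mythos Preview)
Your proposal is correct and matches the paper's approach exactly: the paper simply states that this corollary is a ``direct consequence of Corollary~\ref{cor:MultiplexDecomp}'' without further proof, and your write-up supplies precisely that specialization to $M=2$ together with the explicit $2\times 2$ characteristic polynomial computation. Your remark about the eigenvalue pairing being dictated by the common Schur form is also consistent with the paper's phrasing in the statement itself.
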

If the duplex network corresponds to a master-slave configuration, i.e., either $\kappa_{12}=0$ or $\kappa_{21}=0$, then the eigenvalues of the duplex network are given by the eigenvalues of the individual layers $\sigma(A^{(2)})=\sigma(A_1)\cup \sigma(A_2)$. Analogous statements hold for master-slave configurations with arbitrary number of layers.

In summary, for the discussed special class of multiplex networks,  the spectrum is determined by the eigenvalues of the corresponding layers. In the following, we use these results to simplify the master stability approach for multiplex networks.

\section{Applications of the multiplex decomposition}\label{sec:Mltplx_decomp_appl}
In this section we provide two perspectives where the derived multiplex decomposition can be used to generalize existing results and make others analytically accessible.
\subsection{The master stability approach for multiplex networks: general results}\label{sec:MSA_MltPlx}
Since the introduction of the master stability approach~\cite{PEC98}, this methodology has been successfully used to describe the synchronization phenomena in complex networks~\cite{BOC06a,DAH12,LAD13} and is even today under constant investigation~\cite{BER20b,BOE20,MUL20}. In~\cite{SOL13a,TAN19}, the master stability function for dynamical systems on multiplex networks was analyzed for a diffusive system of the form
\begin{align}\label{eq:DynSys_MltPlx}
	\frac{\mathrm{d}\bm{x}_i^\ell}{\mathrm{d}t} = F(\bm{x}_i^\ell) -  \sum_{j=1}^N a^\ell_{ij} H(\bm{x}_i^\ell-\bm{x}_j^\ell) - \sum_{k=1}^M \kappa_{\ell,k} G(\bm{x}_i^\ell-\bm{x}_i^k)
\end{align}
where $\bm{x}_i^\ell\in\mathbb{R}^d$ is the state vector of the $i$th node of the $\ell$th layer whose connectivity structure is given by the entries  of the real (intralayer) adjacency matrix $A_\ell=[a^\ell_{ij}]$. The functions $F$, $H$ and $G$ describe the local dynamics, the coupling between the systems of the same layer, and the coupling between the  systems of different layers, respectively. We further assume that $G(0)=H(0)=0$. The values $\kappa_{k,\ell}$ ($k\ne \ell$) are interlayer coupling constants.

The synchronous solution 
of Equation~\eqref{eq:DynSys_MltPlx} is given by a solution $\bm{s}(t)$ of $\dot{\bm{s}}=F(\bm{s})$, and the master stability function can be obtained from the variational equation
\begin{align}\label{eq:CoupledSysDiff}
	\dot{\bm{\xi}}=\left(\mathbb{I}_{NM}\otimes DF(\bm{s})-\mathcal{L}^{\text{intra}}\otimes DH(0)- \mathcal{L}^{\text{inter}}\otimes DG(0)\right)\bm{\xi}.
\end{align}
Here, $\bm{\xi}=\bm{x}-\hat{1}_M\otimes\hat{1}_N\otimes\bm{s}$ and $\bm{x}=[\bm{x}^1,\dots,\bm{x}^M]^T$ with $\bm{x}^\ell=[\bm{x}_1^\ell,\dots,\bm{x}_N^\ell]^T$. The \emph{intralayer Laplacian} is defined as
\[
	\mathcal{L}^{\text{intra}}=\diag( L_{1}, \ldots, L_{M})
\]
with
\begin{align*}
	L_{\ell}=\begin{bmatrix}
		\sum_{j=1}^N a^{\ell}_{1,j} & &\\
		& \ddots &\\
		& & \sum_{j=1}^N a^{\ell}_{N,j}
	\end{bmatrix}-A_{\ell}.
\end{align*}
The \emph{ interlayer Laplacian} is defined as
\begin{align}
	\mathcal{L}^{\text{inter}}= {\mathcal L}_{\mathcal{S}}\otimes\mathbb{I}_N = \left(\begin{bmatrix}
		\sum_{\ell=1}^M \kappa_{1,\ell} & &\\
		& \ddots &\\
		& & \sum_{\ell=1}^M \kappa_{M,\ell}
	\end{bmatrix} - \mathcal{S}\right) \otimes \mathbb{I}_N,
\end{align}
where $\mathcal{S}$ is the \emph{ interlayer coupling structure} as in~\eqref{eq:multiplex_adjMtx}. Further details on the system~\eqref{eq:CoupledSysDiff} are given in \cite{TAN19}, where it was shown that if $\mathcal{L}^{\text{intra}}$ and $\mathcal{L}^{\text{inter}}$ commute, then the master stability equation for system~\eqref{eq:CoupledSysDiff} reads
\begin{align*}
	\dot{\bm{\zeta}_i}=\left[DF(\bm{s})-\lambda_i DH(0)-\gamma_j DG(0)\right]\bm{\zeta}_i,
\end{align*}
where $\bm{\zeta}_i\in\mathbb{C}^d$,  and $\lambda_i$ and $\gamma_j$ are  eigenvalues of $\mathcal{L}^{\text{intra}}$ and $\mathcal{L}^{\text{inter}}$, respectively. In the special case $DH(0)=DG(0)$, the master stability equation can be reduced to
\begin{align}\label{eq:MSEComposite}
	\dot{\bm{\zeta}}=\left[DF(\bm{s})-(\lambda_i+\gamma_j) DH(0)\right]\bm{\zeta}.
\end{align}
Equation~\eqref{eq:MSEComposite} is called the \emph{master stability equation for the composite system},  where a single supra-Laplacian matrix~$\mathcal{L}^{\text{supra}}=\mathcal{L}^{\text{intra}}+\mathcal{L}^{\text{inter}}$ describes the network topology~\cite{DOM13,KIV14}.

Note that a sufficient but not necessary condition that the matrices $\mathcal{L}^{\text{intra}}$ and $\mathcal{L}^{\text{inter}}$ commute is that the layers are identical $L_{1}=\cdots=L_{M}$.
Using Corollary~\ref{cor:MultiplexDecomp} for the composite system, the master stability function can be analyzed under  weaker  conditions.
\begin{proposition}\label{prop:MSE_general}
	Consider the variational equation~\eqref{eq:CoupledSysDiff} with coupling functions satisfying $DH(0)=DG(0)$. Suppose further, that all $L_{\ell}$, $\ell=1,\dots,M$ are simultaneously triagonalizable, with corresponding $i$th eigenvalues  $(\lambda_\ell)_i$ of the Laplacian matrix $L_\ell$ (in its common Schur form).
	Then, the local stability of \eqref{eq:CoupledSysDiff} is implied by the local stability of the $N$  systems
	\begin{align}\label{eq:MSEComposite_mild}
		\dot{\bm{\zeta}}=\left[DF(\bm{s})-\psi_i DH(0)\right]\bm{\zeta},
	\end{align}
	for $\psi_i=\psi((\lambda_1)_i,\dots,(\lambda_M)_i)$, $i=1,\dots,N$, where $(\lambda_\ell)_i$ is the $i$th eigenvalue of the Laplacian matrix $L_\ell$ and the  mappings $\psi_i=\psi((\lambda_{1})_i,\dots,(\lambda_{M}))_i$ are defined via the eigenvalues of the matrix 
	\begin{align}\label{eq:masterMapping}
		\left(\begin{bmatrix}
			(\lambda_{1})_i-\psi_i & & 0\\
			& \ddots & \\
			0 &  & (\lambda_{M})_i-\psi_i
		\end{bmatrix} + {\mathcal L}_\mathcal{S}\right),
	\end{align}
	i.e., for $i=1,\ldots,N$, $\psi_i\in \sigma \left(
	\mathrm{diag}\left(
	(\lambda_{1})_i,\dots, (\lambda_{M})_i
	\right)
	+ {\mathcal L}_\mathcal{S}
	\right).$
	%
\end{proposition}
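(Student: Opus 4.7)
The plan is to collapse the two coupling terms in \eqref{eq:CoupledSysDiff} into a single Kronecker product against a supra-Laplacian, apply the multiplex decomposition (Corollary~\ref{cor:MultiplexDecomp}) to that supra-Laplacian, and then push the resulting unitary triangularization through the variational equation via a Kronecker-product change of basis. The diagonal blocks of the transformed system will be exactly the $d$-dimensional master stability systems \eqref{eq:MSEComposite_mild}, and a standard fact about block-triangular time-varying linear systems will yield the stability implication.

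First I would use the hypothesis $DH(0)=DG(0)$ to rewrite \eqref{eq:CoupledSysDiff} as
\begin{align*}
\dot{\bm{\xi}}=\bigl(\mathbb{I}_{NM}\otimes DF(\bm{s})-\mathcal{L}^{\mathrm{supra}}\otimes DH(0)\bigr)\bm{\xi},
\end{align*}
where $\mathcal{L}^{\mathrm{supra}}=\mathcal{L}^{\mathrm{intra}}+\mathcal{L}^{\mathrm{inter}}=\mathrm{diag}(L_1,\dots,L_M)+\mathcal{L}_{\mathcal{S}}\otimes \mathbb{I}_N$. This is precisely the block structure of \eqref{eq:multiplex_adjMtx} (with $L_\ell$ in place of $A_\ell$ and $\mathcal{L}_{\mathcal{S}}$ in place of $\mathcal{S}$), so the extension of Corollary~\ref{cor:MultiplexDecomp} mentioned in the remark following it applies directly: there exists a unitary $U$ with $U^H\mathcal{L}^{\mathrm{supra}}U=\mathbf{B}$ block upper triangular, whose $M\times M$ diagonal blocks are $B_{i,i}=\mathrm{diag}((\lambda_1)_i,\dots,(\lambda_M)_i)+\mathcal{L}_{\mathcal{S}}$ and whose eigenvalues are exactly the $\psi_i$'s of \eqref{eq:masterMapping}. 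Schur-triangularizing each $B_{i,i}=V_iR_iV_i^H$ and setting $V=\mathrm{diag}(V_1,\dots,V_N)$, the unitary $W=UV$ triangularizes $\mathcal{L}^{\mathrm{supra}}$ completely, i.e. $W^H\mathcal{L}^{\mathrm{supra}}W=T$ is upper triangular with diagonal entries ranging over all $\psi_i$.

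Next I would apply the change of variables $\bm{\eta}=(W^H\otimes \mathbb{I}_d)\bm{\xi}$. Since $(W^H\otimes \mathbb{I}_d)(\mathbb{I}_{NM}\otimes DF(\bm{s}))(W\otimes \mathbb{I}_d)=\mathbb{I}_{NM}\otimes DF(\bm{s})$ and $(W^H\otimes \mathbb{I}_d)(\mathcal{L}^{\mathrm{supra}}\otimes DH(0))(W\otimes \mathbb{I}_d)=T\otimes DH(0)$, the transformed variational equation reads
\begin{align*}
\dot{\bm{\eta}}=\bigl(\mathbb{I}_{NM}\otimes DF(\bm{s})-T\otimes DH(0)\bigr)\bm{\eta}.
\end{align*}
Because $T$ is upper triangular, this coefficient matrix is block upper triangular in $d\times d$ blocks, and its $k$-th diagonal block is exactly $DF(\bm{s})-\psi_k DH(0)$, i.e. the master stability equation \eqref{eq:MSEComposite_mild} associated with the eigenvalue $\psi_k$ of some $B_{i,i}$.

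The main obstacle, and the only non-algebraic step, is invoking the standard fact that the top Lyapunov exponent of a block-triangular linear time-varying system equals the maximum of the top Lyapunov exponents of its diagonal blocks; the off-diagonal couplings enter the fundamental solution only via Duhamel-type convolutions that inherit the decay of the diagonal blocks. Granting that fact, if every system in \eqref{eq:MSEComposite_mild} has negative top Lyapunov exponent, then so does \eqref{eq:CoupledSysDiff}, which gives the claimed stability implication.
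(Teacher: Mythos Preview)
Your proposal is correct and follows essentially the same approach as the paper: collapse to the supra-Laplacian using $DH(0)=DG(0)$, apply the multiplex decomposition (Proposition~\ref{prop:DeterminantDecompDiagAble}/Corollary~\ref{cor:MultiplexDecomp}) to $\mathcal{L}^{\mathrm{supra}}$ to obtain the block-triangular form with diagonal blocks $\mathrm{diag}((\lambda_1)_i,\dots,(\lambda_M)_i)+\mathcal{L}_{\mathcal{S}}$, and read off the master stability parameters $\psi_i$. Your write-up is in fact more explicit than the paper's own proof, since you spell out the further Schur step on each $B_{i,i}$, the Kronecker change of basis, and the block-triangular Lyapunov-exponent fact that justifies the stability implication---points the paper leaves implicit.
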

\begin{proof}
	Using $DH(0)=DG(0)$, the variation equation for~\eqref{eq:CoupledSysDiff} on the synchronous solution $\bm{s}(t)$ is given by
	\begin{align*}
		\dot{\bm{\xi}}=\left[\mathbb{I}_{NL}\otimes DF(\bm{s})-\mathcal{L}^{\text{supra}}\otimes DH(0)\right]\bm{\xi},
	\end{align*}
	By assumption all $L_{\ell}$ ($\ell=1,\dots,M$) are pairwise commuting and $\mathcal{L}^{\text{inter}}$ is an $M\times M$ block matrix with blocks that are  $N\times N$ identity matrices multiplied by scalars. Hence, Proposition~\ref{prop:DeterminantDecompDiagAble} can be applied to $\left(\mathcal{L}^{\text{supra}}-\mathbb{I}_{M\cdot N}\right)$. Similarly to Corollary~\ref{cor:MultiplexDecomp}, the eigenvalues are those of
	$\sigma \left[
	\mathrm{diag}\left(
	(\lambda_{1})_i,\dots, (\lambda_{M})_i
	\right)
	+ { \mathcal L}_\mathcal{S}
	\right].$
	In contrast to Corollary~\ref{cor:MultiplexDecomp}, here the matrix
	${\mathcal L}_\mathcal{S}$ appears instead of $\mathcal{S}$, since, due to the diffusive coupling, the interlayer coupling $\mathcal{L}^{\text{inter}}$ involves the Laplacian ${\mathcal L}_\mathcal{S}$. The eigenvalues are then determined from the characteristic equation
	$\det\left(\mathcal{L}^{\text{supra}}-\lambda \mathbb{I}_{M\cdot N}\right)$ as in~\eqref{eq:masterMapping}.
\end{proof}
Proposition \ref{prop:MSE_general} yields a powerful tool to investigate not only the influence of the multiplex network structure on the stability of the synchronous solution but also the impact of different layer topologies.

As an example, we consider duplex systems with simultaneously triagonalizable $L_1$ and $L_2$. Then, the supra-Laplacian has the form
\begin{align*}
	\mathcal{L}^{\text{supra}}=\begin{bmatrix}
		 L_{1}+\kappa_{12}\mathbb{I} & -\kappa_{12}\mathbb{I}\\
		- \kappa_{21}\mathbb{I} &  L_{2}+\kappa_{21}\mathbb{I}
	\end{bmatrix}.
\end{align*}
Knowing the eigenvalues of $L_{1}$ and $L_{2}$, the master stability function parameter 
is determined using Equation~\eqref{eq:masterMapping}.
Equation~\eqref{eq:masterMapping} possesses four solutions.

Since $L_{1}$ and $L_{2}$ are Laplacian matrices, they each have at least one zero eigenvalue corresponding to the $N$-dimensional eigenvector $\hat{1}=[1,\dots,1]^T$. 
Denote the other eigenvalues by $\lambda_1$ and $\lambda_2$, respectively.

As a result, we obtain two solutions of \eqref{eq:masterMapping}: $\mu_{1}=0$ and $\mu_{2}=\kappa_{1,2}+\kappa_{2,1}$. The first value $\mu_1=0$ corresponds to the (multiplex) neutral eigenvector $[\hat{1},\hat{1}]^T$ and the second is induced by the duplex structure and is completely independent of the individual layer topologies. It corresponds to the eigenvector $[\hat{1},-\hat{1}]^T$.
The other eigenvalues $\mu_{3,4}$ of the supra-Laplacian matrix $\mathcal{L}^{\text{supra}}$ are given by the  nonlinear mappings
\begin{align}\label{eq:masterParamDuplex}
	\mu_{3,4}(\lambda_{1},\lambda_{2}) = \frac{{\lambda}_{1}+{\lambda}_{2}+\kappa_{12}+\kappa_{21}}{2}
	\pm\frac12\sqrt{
		\begin{aligned}
			\left({\lambda}_{1}-{\lambda}_{2}+\kappa_{12}-\kappa_{21}\right)^2+ 4\kappa_{12}\kappa_{21}
		\end{aligned}.
	}
\end{align}
%

In the following, we consider two special cases.
First, we assume that there is no connection from the second to the first layer. Then we have a master-slave set-up which means that $\kappa_{12}=0$. With this, the master stability function parameter is $\mu_3(\lambda_{1},\lambda_{2}) =\lambda_{1}$ and $\mu_4(\lambda_{1},\lambda_{2}) = \lambda_{2}+ \kappa_{21}$.
 In this case, the stability of a synchronous solution in the duplex network is reduced to the pure one-layer system. The stability in the duplex system is determined by the spectrum of the individual layer topologies where only in the second layer the spectrum is shifted by $\kappa_{21}$ due to the interaction.

The second case starts from the consideration in~\cite{TAN19} and assumes 
identical layers $L_1=L_2$, and hence $\lambda=(\lambda_1)_i=(\lambda_2)_i$ for all $i=1,\dots,N$.
Taking this into account, the equations for the master stability function parameter~\eqref{eq:masterParamDuplex} yield
\begin{align*}
	\mu_\ell(\lambda_{1},\lambda_{2}) &=\lambda+\rho_{\ell},\quad \ell=3,4,
\end{align*}
where $\rho_{3}=0$ and $\rho_4=\kappa_{12}+\kappa_{21}$ are the eigenvalues of ${\mathcal L}_\mathcal{S}$. Hence, $\rho_{1,2}$ are also eigenvalues of the supra-Laplacian,  since $L_1$ possess at least one zero eigenvalue. Note that due to the zero eigenvalue of the Laplaciancs $L_i$, it holds in general that the eigenvalues of ${\mathcal L}_\mathcal{S}$ are part of the eigenspectrum of $\mathcal{L}^\text{supra}$, even if the layer Laplacians are not simultaneously triagonalizable~\cite{SOL13a}. In this sense, we have generalized this statement and shown that the eigenvalues of multiplex networks with identical layers are exactly the eigenvalues of the single layers shifted by the eigenvalues of the interlayer Laplacian.
\subsection{FitzHugh-Nagumo oscillators on duplex networks}\label{sec:MSA_FHN}
In this section, we show how the simplified master stability approach of Section~\ref{sec:MSA_MltPlx} can be applied to the duplex network~\eqref{eq:DuplexAdj} of coupled FitzHugh-Nagumo oscillators as introduced in Section~\ref{sec:Sync}.

\begin{figure}\label{fig:FHN_SingleToDuplex}
	\centering
	\includegraphics{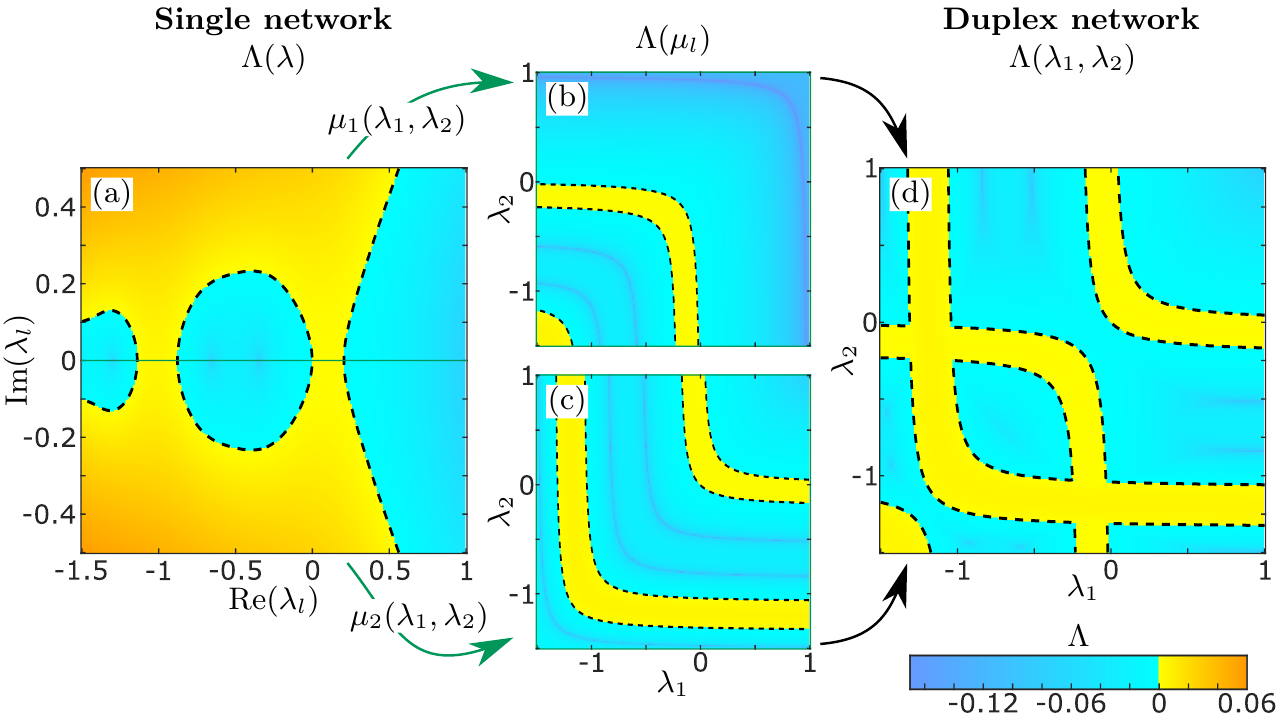}
	\caption{Derivation of the master stability function for the duplex network with undirected layer networks. Panel (a) shows the master stability function $\Lambda(\lambda)$ for a network of FitzHugh-Nagumo oscillators  \eqref{eq:FHN_local}--\eqref{eq:FHN_coupling}. In panel (b,c) two (partial) master stability functions $\Lambda(\mu_j(\lambda_1,\lambda_2))$, $j=3,4$ are shown, where the mappings $\mu_j(\lambda_1,\lambda_2)$ are given by \eqref{eq:masterParamDuplex} for real $\lambda_1$ and $\lambda_2$. Panel (d) shows the simplified master stability function of the duplex network depending on real eigenvalues $\lambda_j$ of the Laplacian for the individual layers. Parameters: $a=0.5$, $\epsilon=0.05$, $\varphi=\pi/2-0.1$, $\kappa_{12}=\kappa_{21}=0.2$}
\end{figure}
 When considering the layers ($l=1,2$) of the duplex separately, the master stability function $\Lambda(\lambda_l)$ can be calculated. This function takes the same form for each layer and, for our illustrative example of coupled FitzHugh-Nagumo oscillators, it is shown in Figure~\ref{fig:FHN_SingleToDuplex}(a). We recall that the master stability function does not depend on the network structure.

Consider further two layers whose networks are undirected, i.e., the layers possess symmetric adjacency matrices. Then, the Laplacian eigenvalues $\lambda_l$ for each layer are real. Restricting ourselves to undirected networks for each layer, we can use the two mappings $\mu_3(\lambda_1,\lambda_2)$ and $\mu_4(\lambda_1,\lambda_2)$ as given in Equation~\eqref{eq:masterParamDuplex} to calculate the master stability parameters for the duplex network.

 Using \eqref{eq:masterParamDuplex} for real $\lambda_{1,2}$ and the master stability function in Figure~\ref{fig:FHN_SingleToDuplex}(a), we obtain the two (partial) master stability functions $\Lambda(\mu_3(\lambda_1,\lambda_2))$ and $\Lambda(\mu_4(\lambda_1,\lambda_2))$ for duplex networks depending on the eigenvalues in the individual layers. These two (partial) master stability functions are shown in Figures~\ref{fig:FHN_SingleToDuplex}(b,c).
 
 Finally,  the master stability function of the duplex network is given
 as the maximum
 $\Lambda_{\mathrm{duplex}}(\lambda_1,\lambda_2)=\max_{j=3,4} \Lambda(\mu_j(\lambda_1,\lambda_2))$,
see Figure~\ref{fig:FHN_SingleToDuplex}(d).
Importantly, the obtained duplex master stability function depends only on the eigenvalues of the individual layers.

\begin{figure}\label{fig:FHN_SimplifiedDuplexExamples}
	\centering
	\includegraphics{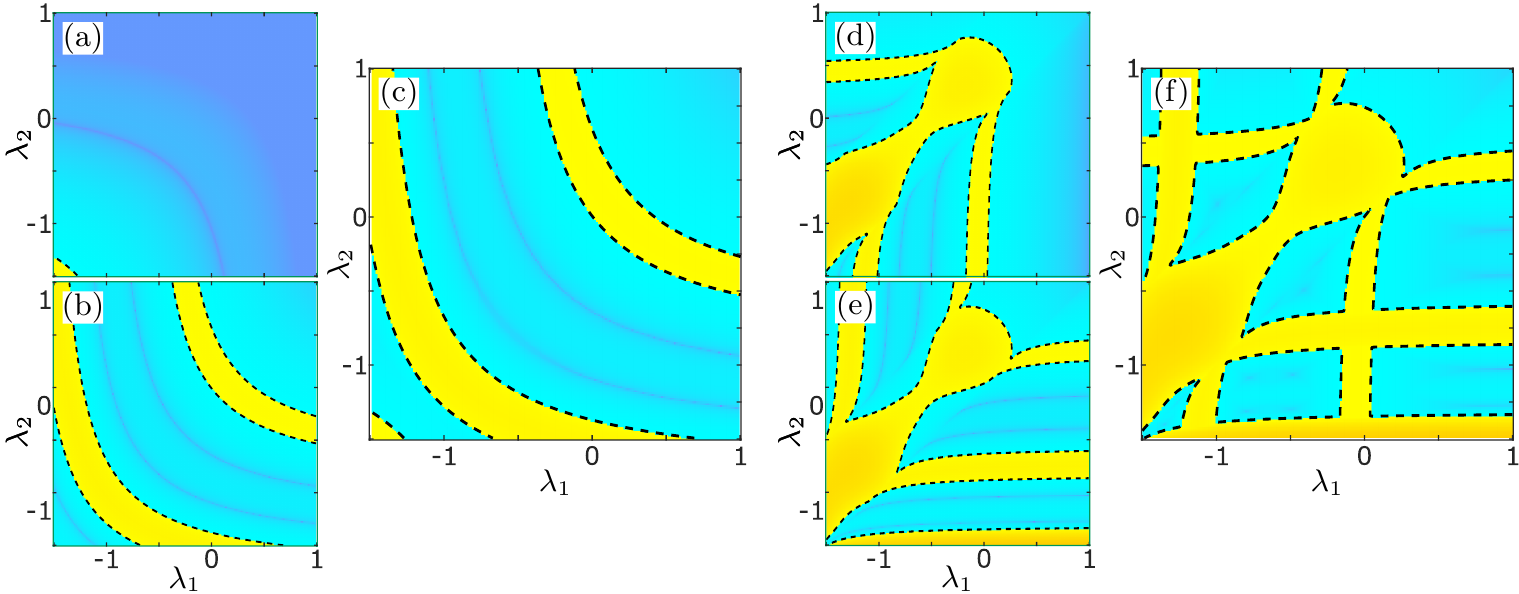}
	\caption{The partial and the simplified master stability function of the duplex network with undirected layer networks for different interlayer coupling strengths. Panels (a,b) and (d,e) show the two (partial) master stability functions corresponding to panels (c) and (f), respectively. In panels (c) and (f), the simplified master stability functions of the duplex network are shown for $\kappa_{12}=0.7, \kappa_{21}=0.9$ and $\kappa_{12}=0.2, \kappa_{21}=-0.3$, respectively. The color code is the same as in \ref{fig:FHN_SingleToDuplex}. Other parameters: $a=0.5$, $\epsilon=0.05$, $\varphi=\pi/2-0.1$.}
\end{figure}

The discussed approach simplifies the application of the master stability function for duplex networks significantly. Instead of computing $2N$ eigenvalues for the duplex network for each combination of any two layers individually, we provide a complete mapping that depends on the characteristic of the individual layers. Therefore, the multiplex decomposition allows for a reduction from a multilayered system to a system with only a single layer. The master stability approach, moreover, reduces the single layer dynamical system even further to the dynamics of a single node. We note that the simplified master stability function for the duplex network shown in Figure~\ref{fig:FHN_SingleToDuplex}(d) depends on the interlayer coupling strengths $\kappa_{12}$ and $\kappa_{21}$. In Figure~\ref{fig:FHN_SimplifiedDuplexExamples}, we provide two different examples of the master stability function of the duplex network for two particular choices of the interlayer coupling strengths.
\begin{figure}\label{fig:FHN_StabDestab}
	\centering
	\includegraphics{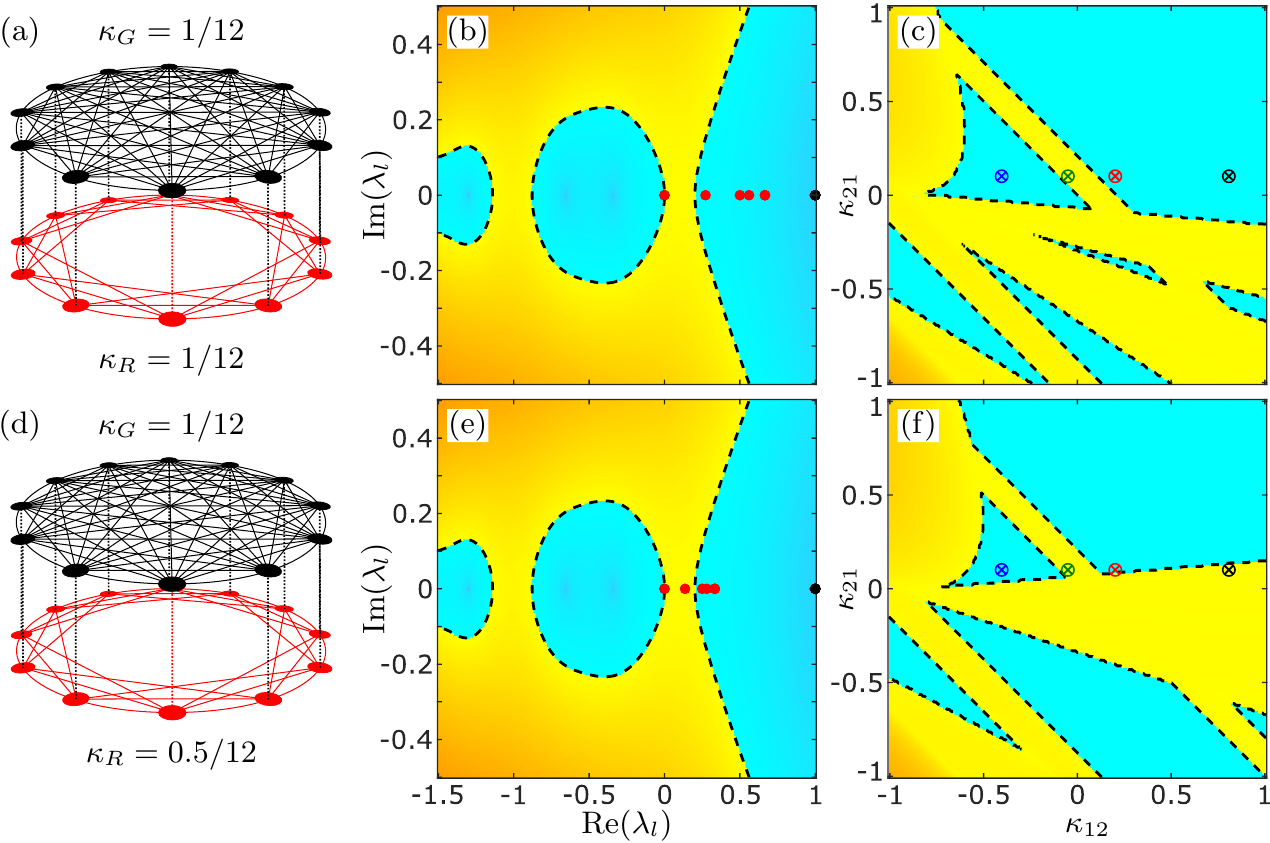}
	\caption{Stability of the synchronous solution depending on the interlayer coupling structure. Panels (a) and (d) show the duplex network structure considered for the panels (b,c) and (e,f), respectively. Panels (b) and (e) show the master stability function for a network of FitzHugh-Nagumo oscillators characterized by \eqref{eq:FHN_local} and \eqref{eq:FHN_coupling} along with the Laplacian eigenvalues of networks shown in (a) and (d). Panels (c) and (f) show the maximal Lyapunov exponents for the synchronous solution depending on the interlayer coupling strengths $\kappa_{12}$ and $\kappa_{21}$. The crosses corresponds to the parameter values used for the analysis in Figure~\ref{fig:FHN_Sync_StabDestab}. The color code is the same as in Fig.~\ref{fig:FHN_SingleToDuplex}. Other parameters: $a=0.5$, $\epsilon=0.05$, and $\varphi=\pi/2-0.1$.}
\end{figure}
We have derived the simplified master stability function for a duplex network for fixed values of the interlayer coupling strengths $\kappa_{12}$ and $\kappa_{21}$. Hence, for two given simultaneously triagonalizable layers and undirected network structure, we are able to determine the local stability of the synchronous solution by looking at the Lyapunov exponents in Figure~\ref{fig:FHN_SingleToDuplex}(c) for the $2N$ Laplacian eigenvalues of the individual layers.

In Figure~\ref{fig:FHN_StabDestab} we show how the stability can change
with the variation of the coupling strengths.
We consider two duplex networks consisting of two layers. The first layer is a globally coupled network where the coupling strengths  for all links is equal to $\kappa_G$. The second layer is a nonlocally coupled ring network with $P=3$, with all nontrivial coupling weights equal to $\kappa_R$. The destabilization with varying $\kappa_G$ and $\kappa_R$ can be simply analyzed using the introduced duplex master stability function.

The upper panel with Figures~\ref{fig:FHN_StabDestab}(a,b,c) corresponds to the
intralayer coupling \eqref{eq:ring} with  $\kappa_G=\kappa_R=1/12$ for both layers. The lower panel corresponds to the coupling strengths \eqref{eq:ring} with
$\kappa_G=1/12$ for the first layer and $\kappa_R=1/24$ for the second one.

Figures~\ref{fig:FHN_StabDestab}(b,e) show the master stability function $\Lambda(\alpha+\mathrm{i}\beta)$ for a network of FitzHugh-Nagumo oscillators. The function is the same for both cases.
Additionally, the Laplacian eigenvalues $\lambda_\ell$ of the individual layers are plotted as colored nodes.
 Note that the eigenvalues are scaled by the intralayer coupling strengths $\kappa_G$ and $\kappa_R$. From Figure~\ref{fig:FHN_StabDestab}(b) it is clear that the single layers corresponding to Figure~\ref{fig:FHN_StabDestab}(a) possess stable synchronous solutions if they are uncoupled, i.e., $\kappa_{12}=\kappa_{21}=0$. For the lower panel, Figure~\ref{fig:FHN_StabDestab}(e) shows that the synchronous solution is linearly unstable for the uncoupled layers.

Figures~\ref{fig:FHN_StabDestab}(c,f) show the maximal Lyapunov exponents for the respective networks in Figures~\ref{fig:FHN_StabDestab}(a,d) depending on the interlayer coupling strengths $\kappa_{12}$ and $\kappa_{21}$. Even though the individual layers of Figures~\ref{fig:FHN_StabDestab}(a) would possess stable synchronous solutions for $\kappa_{12}=\kappa_{21}=0$, there are interlayer coupling structures that yield a positive Lyapunov exponent in Figures~\ref{fig:FHN_StabDestab}(c). Thus, the interaction of two layers can give rise to destabilization of the synchronous solution. In turn the opposite effect can be also obtained. While the second layer in Figures~\ref{fig:FHN_StabDestab}(d) alone would possess an unstable synchronous solution, there are regions in Figures~\ref{fig:FHN_StabDestab}(f) with negative maximal Lyapunov exponent. Hence, the interaction of two layers may stabilize the synchronous solution.
\begin{figure}\label{fig:FHN_Sync_StabDestab}
	\centering
	\includegraphics{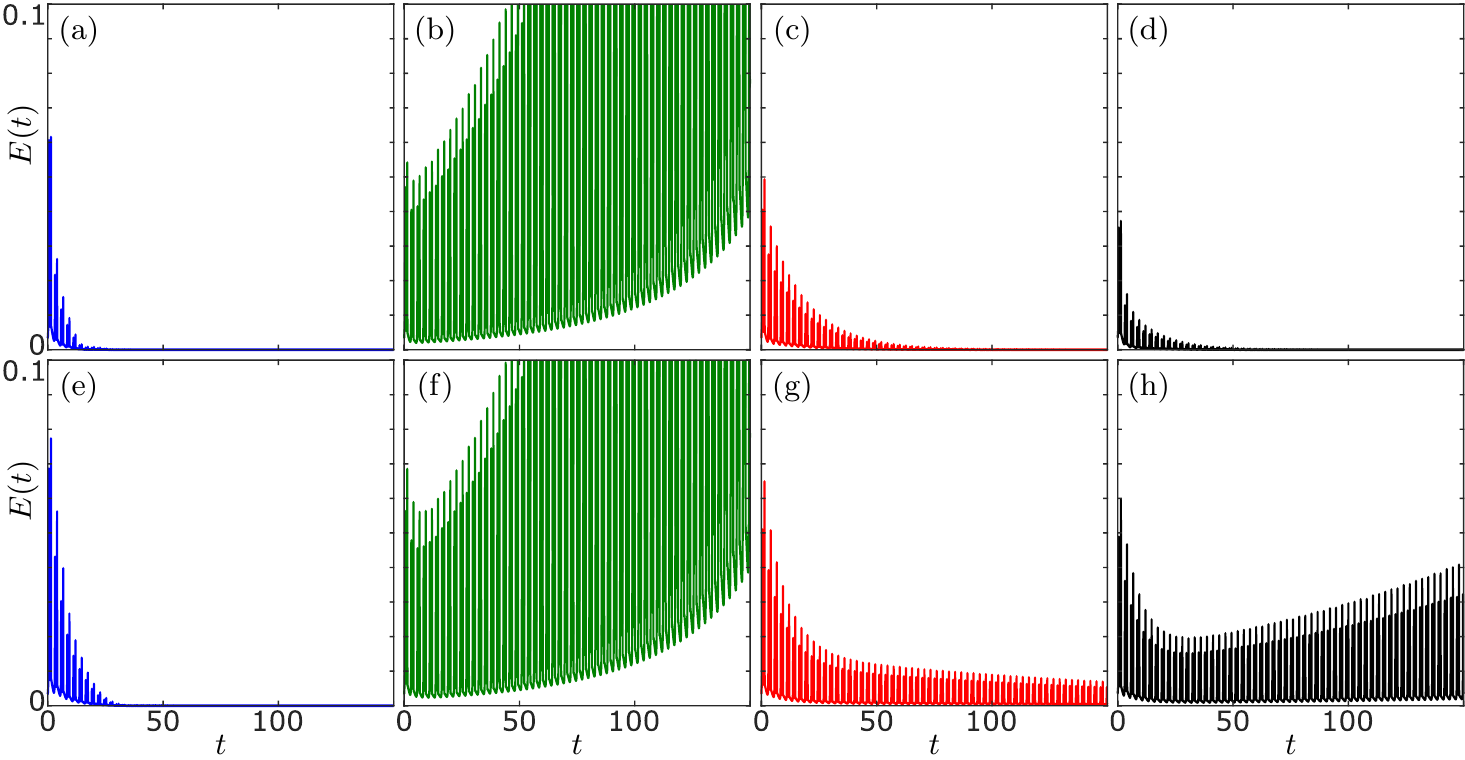}
	\caption{Time series of the synchronization error $E(t)$ for simulations of coupled FitzHugh-Nagumo oscillators starting from a perturbed synchronized solution. The colors of the lines correspond to the colors of the dots in Figure~\ref{fig:FHN_StabDestab}(c,f). Panels (a-d) correspond to Figure~\ref{fig:FHN_StabDestab}(f) and panels (e-h) correspond to Figure~\ref{fig:FHN_StabDestab}(f). All parameters are as in Figure~\ref{fig:FHN_StabDestab}.}
\end{figure}
In order to verify the findings in Figures~\ref{fig:FHN_StabDestab}(c,f), we integrate the coupled system of FitzHugh-Nagumo oscillators numerically for different values of the in\-ter\-layer coup\-ling strength, where we start from a slightly perturbed synchronous solution. The considered  values for $\kappa_{12}$ and $\kappa_{21}$ are displayed in Figures~\ref{fig:FHN_StabDestab}(c,f). As in Fig~\ref{fig:FHN_Sync}, the synchronization error~\eqref{eq:SyncErr} is used to show the synchronization of the network where $E=0$ corresponds to the synchronous solution. The results of the simulation are presented in Figure~\ref{fig:FHN_Sync_StabDestab}. In Figure~\ref{fig:FHN_Sync_StabDestab}, the first row corresponds to simulation of the network shown in Figure~\ref{fig:FHN_StabDestab}(a) and the second row corresponds to simulation of the network shown in  Figures~\ref{fig:FHN_StabDestab}(d). The numerical results verify our analytical findings and the effects of the destabilization and stabilization of synchronous solutions through multiplexing.

Summarizing, in this section we have employed the novel multiplex master stability function, developed in Sec.~\ref{sec:MSA_MltPlx}, to a duplex network of FitzHugh-Nagumo oscillators. We have shown explicitly how a master stability function for a duplex network can be derived. Further, we have investigated the effect of the duplex structure on the stability of the synchronous solution, thus extending the findings of~\cite{BER20} to a more complex system and we have shown that multiplexing can destabilize and stabilize synchronous solutions. In the following section, we show how the multiplex decomposition can be used to understand linear diffusive systems.

\subsection{Analytic treatment of linear diffusive dynamics on multiplex networks}\label{sec:DiffSys}
In the previous section we have considered the dynamics of linear systems given by variational equation~\eqref{eq:CoupledSysDiff}. In the context of diffusive systems on complex networks, linear diffusive processes have been considered recently to study the dynamics on social and transport networks~\cite{BAR11d,GOM13}. In~\cite{GOM13}, the authors investigated a duplex system ($L=2$) with $F=0$, $H=G=\mathbb{I}_d$, and $\kappa=\kappa_{12}=\kappa_{21}$.

Let us assume that the supra-Laplacian matrix is given by
\begin{align*}
	\mathcal{L}^{\text{supra}}=\begin{bmatrix}
		 L_{1}+\kappa\mathbb{I} & -\kappa\mathbb{I}\\
		-\kappa \mathbb{I} &  L_{2}+ \kappa\mathbb{I}
	\end{bmatrix},
\end{align*}
where $L_{1}$ and $L_{2}$ are simultaneously triagonalizable. Due to the structure of the supra-Laplacian, we can apply~Proposition~\ref{prop:MSE_general}. In accordance with~\cite{GOM13} and our former findings in Section~\ref{sec:MSA_MltPlx}, we have the two eigenvalues $\mu_1=0$ and $\mu_2=2\kappa$ corresponding to the neutral eigenvector $(\hat{1},\hat{1})^T$ and the eigenvector $(\hat{1},-\hat{1})^T$, respectively. The other eigenvalues are given as solution to the equations~\eqref{eq:DuplexDecomp} and read
\begin{align*}
	(\mu_i)_{3,4} = \frac{(\lambda_1)_i+(\lambda_2)_i+2\kappa}{2}
	\pm\frac12\sqrt{
		\begin{aligned}
			\left((\lambda_1)_i-(\lambda_2)_i\right)^2+ 4\kappa^2
		\end{aligned}
	},
\end{align*}
where $(\lambda_1)_i$ and $(\lambda_2)_i$ are the nonzero eigenvalues of the Laplacians $L_{1}$ and $L_{2}$, respectively.
Recall that the order of the eigenvalues is given by the specific transformation $Q$ in Proposition~\ref{prop:MSE_general}. With this, under the assumption that $L_1$ and $L_2$ are simultaneously triagonalizable, we have derived an analytic expression for the spectra of the diffusive system  in~\cite{GOM13}. In contrast to~\cite{GOM13}, here, we do not need any perturbation methods and make the dynamics analytically accessible for the full range of interlayer coupling $\kappa$.

\section{Conclusion}\label{sec:conclusion}
In this article, we have developed a novel analysis of the spectral structure of multiplex networks. The new approach has a broad range of applications to physical, biological, socio-economic, and technological systems, ranging from plasticity in neurodynamics~\cite{BER20} or the dynamics of linear diffusive systems~\cite{GOM13,SOL13a} to generalizations of the master stability function~\cite{PEC98,TAN19} for adaptive networks~\cite{BER20b}.

Our multiplex decomposition allows for the understanding of multiplex networks by the features of their individual layers alone. In Sec.~\ref{sec:MSA_MltPlx}, we have derived the general framework for the master stability approach of multiplex networks. We have shown how the Laplacian eigenvalues of the individual layers determine the local stability of the synchronous solution for the multiplex networks if all layers are simultaneously triagonalizable. A special case is given by pairwise commuting layers. Subsequently, in Sec.~\ref{sec:MSA_FHN}, the framework has been applied to duplex networks of coupled FitzHugh-Nagumo oscillators. For undirected duplex networks, we have derived the explicit form of the simplified master stability function. Moreover, we have analyzed the stability of the synchronous solution with varying interlayer coupling structure. We have shown analytically and numerically that multiplexing may stabilize or destabilize synchronous solutions.

As another application of the multiplex decomposition, in Sec.~\ref{sec:DiffSys} we have discussed the dynamics of linear diffusive system on a duplex network. Complementing the analysis of~\cite{GOM13,SOL13a}, we have shown how the eigenvalues of the supra-Laplacian depend on the full range of the interlayer coupling strength. 
By this, we go beyond the perturbation approach used in~\cite{GOM13,SOL13a}.

We note that the restriction of simultaneously triagonalizable layer matrices can be slightly lifted by using existing perturbative methods~\cite{RES06b,GOM13}. Further, we point out that the developed simplification of the master stability approach for multiplex network is not restricted to the specific form of dynamical systems that has been used in this article. In particular, our developed approach can also be used for multiplex systems with delay~\cite{DAH12}, an adaptive network structure~\cite{BER20b}, non-smooth coupling functions~\cite{NIC18}, and  for nearly identical dynamical systems~\cite{SUN09a}.

\section*{Acknowledgments}
We thank Jakub Sawicki for stimulating discussions.


\end{document}